\newcommand{\pabl}[2]{\frac{\partial #1}{\partial #2}}
\newcommand{\id}{\mathds{1}}
\newcommand{\eps}{\varepsilon}
\newcommand{\rar}{\rightarrow}
\newcommand{\lrar}{\longrightarrow}
\newcommand{\uli}{\underline}
\newcommand{\equi}{ \quad \Leftrightarrow \quad }
\newcommand{\matfett}[1]{\text{\boldmath$#1$\unboldmath}}
\theoremstyle{plain}
\newtheorem{Theorem}{Theorem}[section]
\newtheorem{Lemma}[Theorem]{Lemma}
\theoremstyle{definition}
\newtheorem{Def}[Theorem]{Definition}
\theoremstyle{remark}
\numberwithin{equation}{section}
\begin{document}
\selectlanguage{english}

\title{Quantization of Perturbations in Inflation}
\author[1,2]{Benjamin Eltzner}
\affil[1]{Institut f\"ur Theoretische Physik, Universit\"at Leipzig,\authorcr
Postfach 100 920, D-04009 Leipzig, Germany}
\affil[2]{Max-Planck-Institut for Mathematics in the Sciences,\authorcr
Inselstra{\ss}e 22, D-04103 Leipzig, Germany}
\date{July 2013}

\maketitle

\begin{abstract}
	The derivation of the angular spectrum of temperature perturbations of the cosmic microwave background relies on the quantization of field and metric perturbations in the inflationary phase. The quantization procedure thus deserves a close examination. As the background spacetime on which the degrees of freedom that are quantized live is curved, the methods of quantum field theory in curved spacetimes are applicable. Furthermore the dynamic system that is quantized contains a constraint, which adds an interesting problem to the quantization of the system. This article investigates the treatment of the constraints and the Weyl quantization of the system. A preferred quantization procedure is identified, which renders the field and metric perturbations non-local.
\end{abstract}

\section{Introduction}

The theoretic framework of linear cosmological perturbation theory has been used to predict the angular spectrum of temperature perturbations of the cosmic microwave background with great success. As the derivation of the perturbation spectrum relies on the assumption of an inflationary phase in the early universe, the success of the predictions provides a cornerstone argument for the assumption of an inflationary phase in the early universe.

The derivation of the perturbation spectrum starts out with a generic scalar field setting for inflation where a semiclassical treatment is employed. The semiclassical Einstein equation, see \cite{Wal77} for an early discussion, provides an approach which is widely used in the context of quantum field theory on curved spacetime. In this context, matter is assumed to be modeled by quantum field theory, while the spacetime geometry is treated classically. The matter contribution is then represented by the expectation value of the energy-momentum operator. In the case at hand a different setting is chosen, where both the metric and the field are split up into a classical background part and a small quantum perturbation, as e.g. in \cite{BFM}, \cite{Dod} and \cite{Str}.

As a rationale for this approach to quantize geometric quantities one can point to the very high energy density that is expected to govern the inflationary phase suggesting the need to account for quantum effects of gravity. Concerning questions surrounding the choice of quantum state and its impact on the spectrum of perturbations we refer the reader to \cite{Hol}, \cite{MM}, \cite{BMR}, \cite{KKLSS}, \cite{Kun} and references therein. For the derivation of the angular temperature spectrum from the quantum perturbations see \cite{BFM}, \cite{Dod} and \cite{Str}, for a critical assessment see \cite{PSS}.

In the present article the quantization of the dynamic system of perturbations during inflation will be investigated. As preparation, section \ref{sec:basics} introduces the model that is investigated in the following. Section \ref{sec:symfor} is dedicated to the investigation of the dynamic system of perturbations in a phase space setting using standard tools for dynamic systems with constraints. The Dirac bracket governing the classical dynamics is identified and in section \ref{sec:dynvar} an argument is proposed, which enables a unique choice of dynamic variables. The preferred quantization procedure implies non-locality of the field and metric perturbations which provides an interesting connection to non-commutative inflation \cite{ABM}. Lastly, Weyl quantization of the dynamic system with constraints is discussed in section \ref{sec:weylq}.

\section{Linear Cosmological Perturbation Theory} \label{sec:basics}

Cosmological perturbation theory is a special case of perturbation theory in general relativity. The present article will only be concerned with linear perturbation theory, which means that only the tree level and the first order in the perturbations are considered and higher orders are neglected. For this to be a sensible approximation, it is important to make sure that solutions of the linearized equations of motion are linear approximations to exact solutions and that the impact of higher order perturbations is negligible. The first requirement has become known as linearization stability and certain necessary relations involving the second order perturbations were formulated \cite{Mon75}\cite{Mon76}. It has been shown that large classes of well-behaved solutions exist for Robertson-Walker background models \cite{DEa}. However, the topic has recently been revisited by Losic and Unruh \cite{LU08} and it was found that the impact of the stability conditions for background models close to de Sitter spacetime, which are investigated in the present article, are of the same order of magnitude as the linear perturbations. This fact makes an investigation in a more general context of nonlinear perturbation theory appear attractive. Allthough we do not treat higher orders of perturbation theory, we will point out which differences occur in that case.

The basic idea of linear perturbation theory is to split all quantities up into a background part and a perturbation. In the case of geometric quantities, most prominently the metric, this is nontrivial due to the role of diffeomorphism covariance in general relativity. It turns out that only part of this covariance survives in the setting of linear perturbation theory, which is interpreted as covariance under infinitesimal diffeomorphisms and labeled gauge freedom.

Using a $3+1$ decomposition of the background metric, as reviewed in \cite{Wal84}, it is possible to classify linear perturbations by their transformation behavior under symmetries of the spatial section. This leads to the definition of scalar, vector and tensor perturbations. In the present article we will only be concerned with scalar perturbations, thus vector and tensor perturbations will be dropped, after they have been identified.

The treatment in this section mostly follows \cite{Str} where no other sources are cited.

The background spacetime is assumed to be have a Friedmann-Robertson-Walker metric with flat spatial sections with line element
\begin{align*}
	ds^2 = a^2(\eta) \left(d\eta^2 - d\vec{x}^2\right) .
\end{align*}
Here, $\eta$ is the conformal time and the spacetime is obviously conformally flat. For some given quantity $X$ the derivative with respect to conformal time is denoted as $X'$. We also denote $\mathcal{H}=\tfrac{a'}{a}$. The time dependence of the scale factor $a$ depends crucially on the matter model coupled to the spacetime via the Einstein equations.

As preparation for the formulation of the considered model, a short introduction to linear perturbation theory is in order. The presentation follows the formulation used by Moncrief \cite{Mon75}\cite{Mon76} which slightly differs in style from what is usually found in the literature, e.g. \cite{BFM} and \cite{Str}. The key element for the description of linear perturbations and gauge freedom is the interpretation of perturbed quantities as one-parameter families of tensor fields.

\begin{Def} $ $
	\begin{itemize}
		\item Let $\mathcal{O} \subset \mathbb{R}$ an open neighborhood of $0$ and let $T : \mathcal{O} \rar \Gamma(T^r_s\mathcal{M})$ a smooth one-parameter family of smooth r-s-tensor fields. Then $T_0:=T(0)$ is called the background quantity and $\delta T := \tfrac{d}{d\lambda} T(\lambda)|_{\displaystyle {}_{\lambda=0}} \lambda$ is called the linear perturbation.
		\item Let $\Phi : \mathcal{O} \rar \textnormal{Diff} \mathcal{M}$ a smooth one-parameter family of diffeomorphisms generated by the vector field $X$, i.e. $\Phi(0) = \textnormal{id}_\mathcal{M}$, $X = \tfrac{d}{d\lambda} \Phi(\lambda)|_{\displaystyle {}_{\lambda=0}}$ and for the pull-back $(T^r_s\Phi)(0) = \textnormal{id}_{\Gamma(T^r_s\mathcal{M})}$. Then we define $\widetilde{\delta T} := \tfrac{d}{d\lambda} \big( (T^r_s\Phi)(\lambda) T(\lambda)\big)|_{\displaystyle {}_{\lambda=0}} \lambda$ as the transformed perturbation while the background quantity is unchanged, $\tilde{T}_0 := (T^r_s\Phi)(0) T_0 = T_0$.
	\end{itemize}
\end{Def}

The above definition has the benefit that it is mathematically concise and yields the transformation law of the perturbations under gauge transformations in a very transparent way,
\begin{align*}
	\widetilde{\delta T} = \delta T + \lambda \mathscr{L}_X T_0 =: \delta T + \mathscr{L}_{\xi} T_0 .
\end{align*}
In this equation $\xi = \lambda X$ is often interpreted as an ``infinitesimal vector field'' to justify the linearization. In any case the linearization is seen to be essentially a linearization of the Taylor expansion in $\lambda$ of one-parameter families of tensor fields. Using the Levi-Civita connection and local coordinates with respect to the background metric $g_0$ we get
\begin{align*}
	\widetilde{\delta T}^{\matfett{\rho}}{}_{\matfett{\sigma}} = \delta T^{\matfett{\rho}}{}_{\matfett{\sigma}} + \xi^\alpha \nabla_\alpha T_0{}^{\matfett{\rho}}{}_{\matfett{\sigma}} - \sum_{n=1}^r (\nabla_\alpha \xi^{\rho_n}) T ^{\,\hat{\!\matfett{\rho}}_n}{}_{\matfett{\sigma}} + \sum_{n=1}^s (\nabla_{\sigma_n} \xi^\alpha) T ^{\matfett{\rho}}{}_{\hat{\matfett{\sigma}}_n}
\end{align*}
where the multiindices $\hat{\matfett{\sigma}}_n$ and $\,\hat{\!\matfett{\rho}}_n$ are equal to the multiindices $\matfett{\sigma}$ and $\matfett{\rho}$ respectively, except that the index at the $n$-th position, i.e. $\sigma_n$ or $\rho_n$ respectively, is replaced by $\alpha$. Due to the metric compatibility of the Levi-Civita connection this yields an especially simple transformation law for the metric perturbation $\delta g$ given by
\begin{align*}
	\widetilde{\delta g}_{\mu\nu} = \delta g_{\mu\nu} + \nabla_{\mu} \xi_\nu + \nabla_{\nu} \xi_\mu .
\end{align*}

In the case of nonlinear perturbations it may be reasonable to also consider nonlinear gauge transformations, which makes the treatment of gauge transformations much more complex. However, an investigation of higher order perturbations in the context of the Dirac formalism should strive to treat gauge freedom in terms of first class constraints. Thus the difficulties of an explicit formulation of the gauge freedom would not play a role at this stage of the treatment.

It is customary to classify metric perturbations using a $3+1$ decomposition of the background spacetime into spatial section and time direction if the background metric is sufficiently symmetric, which is the case here. The perturbations are decomposed into scalar, vector and tensor part, dependent on their transformation behavior under the symmetry group of the spatial section.

One naively gets 4 scalar, 4 vector and 2 tensor perturbation degrees of freedom. However, the vector fields $\xi$ (generating gauge transformations) can be decomposed into a scalar part $\xi_s = \xi^0 \partial_0 + \xi^{|i} \partial_i$ and a vector part $\xi_v = x^i \partial_i$ with $x^i_{|i} = 0$, with two components each. This means that the gauge freedom reduces the degrees of freedom such that 2 gauge invariant degrees of freedom of every type remain. The different types of perturbations do not mix under gauge transformations and dynamical evolution, provided the spatial sections of the background spacetime satisfy certain curvature constraints \cite{Lo05}. In higher orders of perturbation theory the different types of perturbations are dynamically coupled, which reduces the usefulness of this decomposition in those cases \cite{Lo05}.

In the following the necessary formulae for a basic treatment of cosmological questions in the context of linear perturbation theory are derived. As explained above, the background metric used for cosmological perturbation theory is a Robertson-Walker metric with flat spatial section. We restrict now to scalar perturbations, so the perturbed metric can be expressed using conformal time as
\begin{align*}
	ds^2 = a^2(\eta) \Big( (1+2A) d\eta^2 + 2 B_{|i} d\eta dx^i - \big( (1+2D)\delta_{ij} + 2E_{|ij} \big) dx^i dx^j \Big)
\end{align*}
where $A$, $B$, $D$ and $E$ are functions of all coordinates. It should be noted that $A$, $B$, $D$ and $E$ are not exactly the same as the $s_n$ used above, but differ by a factor of $2 a^2(\eta)$. This yields slightly different behavior under gauge transformations. As we restrict to scalar perturbations we also restrict to scalar gauge vector fields $\uli{\xi} = \xi^0 \partial_0 + \xi^{|i} \partial_i$. This yields transformation laws
\begin{align*}
	\tilde{A} =& A + \mathcal{H} \xi^0 + (\xi^0)' & \tilde{B} =& B + \xi^0 - \xi'\\
	\tilde{D} =& D +\mathcal{H} \xi^0 & \tilde{E} =& E + \xi
\end{align*}
Obviously any two of the perturbations can be brought to vanish simultaneously except $A$ and $D$, as they are both only changed due to $\xi^0$. This means if $A$ and $D$ can be brought to vanish simultaneously in some gauge, this has a physical implication. An often used gauge is the conformal gauge, where $B_c=E_c=0$. Gauge invariant variables are frequently used to separate physical contents from gauge effects. However, it was shown by Unruh \cite{Un98} some time ago, that for each gauge fixing a set of gauge invariant variables satisfying the same dynamic equations exists, so some caution regarding physical interpretation of results is required. As we have four scalar perturbations and two gauge parameters we can derive two gauge invariant scalar perturbations. The ones usually defined in the literature are the Bardeen potentials
\begin{align*}
	\Psi =& A - \frac{1}{a} \big( (B+E') \big)' & \Phi =& D - \mathcal{H} (B+E')
\end{align*}

As matter model, we use a simple scalar field $\phi$ with a so far unspecified self-interaction potential $V$. The field satisfies the Klein-Gordon field equation
\begin{align*}
	\square \phi = - \pabl{V}{\phi}(\phi) = : - V_{,\phi}
\end{align*}
where $V(\phi)$ is some potential. The energy-momentum tensor takes the form
\begin{align*}
	T^\mu_\nu = \partial^\mu \phi \partial_\nu \phi -\delta^\mu_\nu \Big( \frac{1}{2} \partial^\lambda \phi \partial_\lambda \phi - V(\phi) \Big)
\end{align*}

In keeping with the idea of linear perturbation theory also the scalar field is decomposed into a background part and a perturbation $\phi = \phi_0 + \delta \phi$. The gauge transformation behavior of the field perturbation is
\begin{align*}
	\widetilde{\delta\phi} = \delta\phi + \xi^0 \phi'_0
\end{align*}
which implies the following gauge invariant perturbation
\begin{align*}
	\chi = \delta \phi - \phi'_0 (B+E')
\end{align*}
Again in conformal gauge $\delta \phi_c = \chi$; in the following the subscript indicating conformal gauge will be suppressed and whenever expressions contain $\delta \phi$ they are to be understood in conformal gauge.

To formulate the Einstein equation it is necessary to calculate the Einstein tensor, which is not gauge invariant. However $G^\mu_\nu -8 \pi G T^\mu_\nu$ can be cast into a gauge invariant form, as it vanishes in any gauge. This implies that the Einstein equations can be formulated using only gauge invariant variables. This means we can calculate the Einstein tensor and the energy momentum tensor in some arbitrary gauge, and the resulting Einstein equations will be true for any gauge. Hence, we can replace the perturbation variables in the Einstein equations by gauge invariant variables that are equal to them in the chosen gauge. The resulting equation, containing only gauge invariant variables, still holds in any gauge. A calculation using the above strategy is conducted in \cite{Str} and we will only give the results here.

The idea of perturbation theory is to assume the equations to hold order by order. This means we will have a background Einstein and Klein-Gordon equations and perturbations to these equations which will be assumed to hold separately. The background Einstein equations are
\begin{align*}
	- 3 \mathcal{H}' =& 4\pi G (\phi'_0)^2 + 8 \pi G V(\phi_0) a^2\\
	- \mathcal{H}' - 2 \mathcal{H}^2 =& - 4 \pi G (\phi'_0)^2 + 8 \pi G V(\phi_0) a^2
\end{align*}
and the background Klein-Gordon equation reduces to
\begin{align}
	\phi''_0 + 2 \mathcal{H} \phi'_0 + V_{,\phi}(\phi_0)a^2 =0 \label{bgKGG}
\end{align}

The perturbation Klein-Gordon and Einstein equations will lead to a set of four independent equations. From the Einstein equations we get
\begin{align}
	\Phi &= - \Psi\label{lEin0}\\
	\Psi'' &- \triangle \Psi + 2 \left( \mathcal{H} - \frac{\phi''_0}{\phi_0'} \right) \Psi' + 2 \left( \mathcal{H}' - \frac{\phi''_0}{\phi_0'} \mathcal{H} \right) \Psi = 0 \label{lEin1}\\
	\Psi' + \mathcal{H} \Psi &= 4 \pi G \phi'_0 \chi \label{lEin2}
\end{align}
and the first order of the Klein-Gordon equations yields
\begin{align}
	\chi'' - \triangle \chi + 2 \mathcal{H} \chi' + V,_{\phi\phi}(\phi_0) a^2 \chi = 4 \phi_0' \Psi' - 2 V,_{\phi} (\phi_0) a^2 \Psi \label{lKGG}
\end{align}

The coupled differential equations \eqref{lEin1}, \eqref{lKGG} and \eqref{lEin2} govern the dynamics of the gauge invariant perturbations $\Psi$ and $\chi$. The constraint equation \eqref{lEin2} only holds up to a homogeneous term, which plays an important role in the following. It should be noted, that the dynamic system is significantly more complex in the case that higher order perturbations are considered. The decomposition into scalar, vector and tensor perturbations and the usage of gauge invariant variables cannot be straightforwardly generalized and nonlinearities in the linear perturbations occurring in the dynamic equations increase complexity.

The present article will investigate gauge invariant linear perturbations using these equations and $\Phi$ will not be treated as an independent variable due to equation \eqref{lEin0}. An alternative derivation of the dynamic system is possible by means of a Lagrangian approach reviewed in \cite{BFM}. The Lagrangian approach will be investigated in section \ref{sec:symfor}.

\section{Canonical Symplectic Form} \label{sec:symfor}

It is clear from the dynamic system  consisting of \eqref{lEin1}, \eqref{lKGG} and \eqref{lEin2} that only one independent scalar degree of freedom exists. From the constraint equation \eqref{lEin2} one can already anticipate that $\Psi$ and $\chi$ cannot both satisfy canonical commutation relations. As we are not aware that this has been done previously, we will make this point explicit here.

For simplified and homogeneous display of the equations, the following calculations use the parameter $z=\tfrac{a\phi'_0}{\mathcal{H}}$ and the slow-roll parameters
\begin{align}
	\eps =&\; 1 - \frac{\mathcal{H}'}{\mathcal{H}^2} = 4 \pi G \frac{z^2}{a^2} \label{defeps}\\
	\delta =&\; 1 + \eps - \frac{z'}{z\mathcal{H}} \label{defdelta}
\end{align}

These relations can be used to unify the notation for the dynamic system such that only $a$, $\mathcal{H}$, $\delta$, $\eps$ and $z$ appear in the coefficients. Recasting all terms using $\eps$ and $\delta$ sometimes enlarges terms but is nevertheless performed in order to stick to a unified formulation.
\begin{alignat}{4}
	\Psi'' - && \triangle \Psi + && 2 \delta \mathcal{H} \Psi' + && 2 ( \delta - \eps ) \mathcal{H}^2 \Psi & = 0 \label{E01}\\
	&& && \Psi' + && \mathcal{H} \Psi & = \frac{\eps \mathcal{H}a}{z} \chi \label{E02}
\end{alignat}

Using the relations
\begin{align}
	\chi &= \frac{z}{\eps \mathcal{H} a} \Psi' + \frac{z}{\eps a} \Psi \label{constr1}\\
	\chi' &= - \frac{\delta z}{\eps a} \Psi' + \frac{z}{\eps \mathcal{H} a} \big( \triangle + ( \eps - \delta ) \mathcal{H}^2 \big) \Psi \label{constr2}
\end{align}
the following result can be calculated

\begin{Lemma} \label{Lemma:CCRNoGo} $ $ \\
	If there is a homogeneous bidistribution with integral kernel $f(\vec{x}-\vec{y})$ and the following relations hold
	\begin{align}
		\left[\Psi(x), \Psi(y) \right]\big|_{x^0=y^0} &= \left[\Psi'(x), \Psi'(y) \right]\big|_{x^0=y^0} = 0\\
		\left[\Psi(x), \Psi'(y) \right]\big|_{x^0=y^0} &= \left[\Psi(y), \Psi'(x) \right]\big|_{x^0=y^0} = f(\vec{x}-\vec{y})
	\end{align}
	then
	\begin{align}
		\left[\chi(x), \chi(y) \right]\big|_{x^0=y^0} &= \left[\chi'(x), \chi'(y) \right]\big|_{x^0=y^0} = 0\\
		\left[\chi(x), \chi'(y) \right]\big|_{x^0=y^0} &= - \frac{z^2 \big( \triangle^x + \eps \mathcal{H}^2 \big) }{(\eps \mathcal{H} a)^2} \left[\Psi(x), \Psi'(y) \right]\big|_{x^0=y^0}
	\end{align}
	Note that $a$, $\eps$, $\mathcal{H}$ and $z$ only depend on time and the expressions investigated here are evaluated at equal times. Therefore their time dependence is suppressed.
\end{Lemma}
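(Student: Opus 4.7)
The plan is to substitute the algebraic relations \eqref{constr1} and \eqref{constr2} into each of the four commutators, use bilinearity of the commutator, and reduce everything to combinations of $[\Psi(x),\Psi'(y)]|_{x^0=y^0} = f(\vec{x}-\vec{y})$. Since $a$, $\mathcal{H}$, $\eps$, $z$ and $\delta$ depend only on time and all commutators are evaluated at equal times, the prefactors appearing in \eqref{constr1} and \eqref{constr2} pass through the commutator as scalars; the Laplacian $\triangle^x$ in \eqref{constr2} likewise pulls out linearly as a differential operator acting on the first argument.

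For the diagonal commutators $[\chi(x),\chi(y)]$ and $[\chi'(x),\chi'(y)]$, the contributions proportional to $[\Psi(x),\Psi(y)]$ and $[\Psi'(x),\Psi'(y)]$ vanish by assumption, so only two mixed terms survive in each case. Using antisymmetry of the commutator together with the assumed symmetry $[\Psi(x),\Psi'(y)] = [\Psi(y),\Psi'(x)] = f(\vec{x}-\vec{y})$, these two mixed contributions are equal and opposite and cancel identically. The $[\chi'(x),\chi'(y)]$ case requires the additional observation that $\triangle^x f(\vec{x}-\vec{y}) = \triangle^y f(\vec{x}-\vec{y})$ -- first derivatives with respect to $x^i$ and $y^i$ differ by a sign, but the Laplacian is of second order, so no sign appears. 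After this identification the same cancellation argument applies.

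The nontrivial case is $[\chi(x),\chi'(y)]$. Bilinear expansion yields four terms; the contributions involving $[\Psi'(x),\Psi'(y)]$ and $[\Psi(x),\Psi(y)]$ vanish. The remaining two combine, after pulling the Laplacian back to the $x$-argument via $\triangle^y f = \triangle^x f$ and using $[\Psi'(x),\Psi(y)] = -f(\vec{x}-\vec{y})$, into
\begin{align*}
	\left[\chi(x),\chi'(y)\right]\big|_{x^0=y^0} = -\frac{z^2}{(\eps \mathcal{H} a)^2} \left[ (\triangle^x + (\eps-\delta)\mathcal{H}^2) + \delta \mathcal{H}^2 \right] f(\vec{x}-\vec{y}),
\end{align*}
where the extra $+\delta \mathcal{H}^2$ comes from the $\Psi$-coefficient in \eqref{constr1} multiplied by the $\Psi'$-coefficient in \eqref{constr2}. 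The two $\delta \mathcal{H}^2$ contributions cancel, leaving exactly the claimed differential operator $\triangle^x + \eps \mathcal{H}^2$ acting on $f(\vec{x}-\vec{y}) = [\Psi(x),\Psi'(y)]|_{x^0=y^0}$.

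The computation is essentially mechanical bookkeeping; the two delicate points where I would be careful are the sign in $[\Psi'(x),\Psi(y)] = -f(\vec{x}-\vec{y})$, which is responsible for the cancellation of the off-diagonal terms in the two diagonal commutators (rather than their doubling), and the identification $\triangle^x f(\vec{x}-\vec{y}) = \triangle^y f(\vec{x}-\vec{y})$, which allows the final answer to be written with a single Laplacian. With these handled, the $\delta$-dependent terms conspire to cancel, and the Lemma follows by direct substitution.
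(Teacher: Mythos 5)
Your computation is correct and follows exactly the calculation the paper alludes to after stating \eqref{constr1} and \eqref{constr2} (the paper gives no explicit proof, saying only that the result ``can be calculated''). Your tracking of the two subtleties — the sign $[\Psi'(x),\Psi(y)]=-f(\vec x-\vec y)$ driving the cancellation in the diagonal commutators, and $\triangle^x f(\vec x-\vec y)=\triangle^y f(\vec x-\vec y)$ allowing both Laplacians to be written as $\triangle^x$ — is precisely what is needed, and the $\delta\mathcal{H}^2$ bookkeeping correctly produces the stated $\triangle^x+\eps\mathcal{H}^2$.
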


It is interesting to derive a dynamic equation for $\chi$ alone from \eqref{lKGG} using the constraints. The resultant equation
\begin{align}
 &(\triangle + \eps\mathcal{H}^2) \chi'' + 2 \mathcal{H}(\triangle + \eps\delta\mathcal{H}^2) \chi' + \left( - \triangle^2 + \left(- 2 \eps + 3 \delta - \eps \delta - \delta^2 + \frac{\delta'}{\mathcal{H}} \right) \mathcal{H}^2 \triangle  \right. \nonumber\\
 &\qquad \left. + \left(- \eps + \delta - \eps \delta + \delta^2 + \frac{\delta'}{\mathcal{H}} \right) \eps \mathcal{H}^4 \right) \chi = 0 \label{E07}
\end{align}
is not a normal Klein-Gordon equation but a dynamic equation of fourth order. Thus it is not to be expected that the CCR are of any significance for $\chi$. The field $\Psi$, however, fulfills a Klein-Gordon type equation, so at first glance it seems reasonable to assume standard CCR for this field as a means of quantization and treating $\chi$ as a derived field whose commutation relations as derived from the CCR for $\Psi$ bear no significance. However, such a naive approach will be proven misguided by the following investigation.

The usual choice of the field to quantize via CCR is $u = z \Psi + a \chi$ which is justified by the fact that the quantity $\mathcal{R} = - \tfrac{u}{z}$ can be interpreted as gauge invariant curvature perturbation, which is the quantity that is used to derive the temperature perturbation. In the present treatment we take the point of view that the physical significance of $u$ does not by itself justify choosing it as canonical variable for quantization. Therefore we are looking for a structural argument to identify a preferred quantization procedure.

In this section the dynamic system of the spatial Fourier modes is reviewed from the angle of symplectic geometry. To this avail, the dynamical system has to be cast in Hamiltonian form, which implies the identification of the Darboux coordinates of the symplectic form. To achieve this, it is necessary to use the action presented e.g. in \cite{BFM} and to derive the corresponding Hamiltonian system. Then the impact of the constraint on the symplectic structure is discussed. The following discussion will employ Hamiltonian and Lagrangian densities, where the term ``density'' will always be suppressed.

For the treatment of a dynamical system with second class constraints several methods exist. The dynamical variables, including the Lagrange multipliers, may be combined into a reduced set of dynamical variables in such a way that the resulting Lagrangian contains no constraints and only depends on the new variables. This is the strategy which is pursued in the standard literature on cosmological perturbations during inflation, especially \cite{BFM}. However, finding an appropriate reduced set of dynamical variables is not a straightforward task and is usually done by simply choosing the appropriate ansatz to begin with. As the resultant dynamical system is a generic Hamiltonian system, quantization is straightforward.

A second method consists in the straightforward construction of a Hamiltonian containing the Lagrange multipliers and the subsequent application of a standard procedure to eliminate the Lagrangian multipliers. This procedure does not lead to an obvious reduction of phase space, however it can be seen that the phase space is not symplectic and the complement of the maximally symplectic subspace has a dimension which is identical to the number of constraints. This procedure has the disadvantage that the geometry of the phase space is unclear and therefore quantization of the dynamical system is not straightforward.

Yet another method is the Dirac method, which will be employed in this work. In this context the constraints are ignored at first, to produce an ``unconstrained Hamiltonian'' with a corresponding canonical symplectic form. Then the constraints are used to derive the so-called Dirac bracket from the Poisson bracket of the unconstrained system. The tensor corresponding to the Dirac bracket will be called the ``Dirac form'' in the present treatment. The Dirac form is degenerate in a number of dimensions corresponding to the number of constraints and symplectic in the remaining dimensions. The present treatment will pursue the derivation of the Dirac bracket to shed some light on the quantization of the system. The classic reference for this method is \cite{Dir} but it is treated to varying degree of detail in several textbooks, e.g. \cite{Der}, \cite{HT} and \cite{Wei}.

We would like to point to \cite{DLP}, where a similar scenario is investigated but vector and tensor perturbations are also taken into account. However, that work applies a gauge fixing procedure, which is incompatible to the gauge invariant variables considered here, to achieve a reduced phase space, which leads to a different treatment than pursued in the present work.

As a first step it is convenient to rescale the fields $\Psi$ and $\chi$ so they have the same dimension. Therefore we define the rescaled fields
\begin{align*}
	\matfett{\psi} := z \Psi \qquad \lambda := a \chi
\end{align*}
The Lagrangian given by equation (10.68) in \cite{BFM} is given here in terms of the rescaled variables, suppressing the total divergences and simplifying by setting $\Phi = - \Psi$ and $L:= 4\triangle (B-E')$ (note the different notation). Note that the zero mode of $L$ vanishes, which implies that the constraint does not hold in the zero mode. The fact, that the constraint does not apply to the zero mode has first been pointed out by Unruh \cite{Un98} who also studied the dynamics of the zero mode in detail. A separate investigation of the zero mode will thus be performed in the following. From the treatment in \cite{BFM} it is clear that the terms introduced by switching to gauge invariant variables cancel out.
\begin{align*}
	\mathscr{L} =&\; \frac{1}{2\eps} \left[ - 3 (\matfett{\psi}')^2 + 6 (\eps-\delta)\mathcal{H} \matfett{\psi}' \matfett{\psi} + (\eps - 3 \eps^2 + 6 \eps \delta - 3 \delta^2) \mathcal{H}^2 \matfett{\psi}^2 + \matfett{\psi} \triangle \matfett{\psi} \right]\\
	&\; + \frac{1}{2} \left[ (\lambda')^2 - 2 \mathcal{H} \lambda' \lambda + \left(1- 3 \eps - 3 \delta + \eps \delta + \delta^2 - \frac{\delta'}{\mathcal{H}} \right) \mathcal{H}^2 \lambda^2 + \lambda \triangle \lambda \right]\\
	&\; + 4 \mathcal{H} \matfett{\psi}' \lambda + 2(1- 2 \eps + \delta) \mathcal{H}^2 \matfett{\psi} \lambda + z L \left[ - \frac{1}{\eps} \matfett{\psi}' + \frac{\eps -\delta}{\eps} \mathcal{H} \matfett{\psi} + \mathcal{H} \lambda \right]
\end{align*}
The fields are treated as real fields here, but a generalization to complex fields is straightforward.

The next step in the treatment is to consider the ``unconstrained Lagrangian'', thus setting $L=0$. We will additionally switch to a ``mode Lagrangian'', using modes $\matfett{\psi}_k : = \int e^{i \vec{k}\vec{x}} \matfett{\psi}(\vec{x})$. The modes only depend on $k = |\vec{k}|$ due to the fact that the dynamics are spatially homogeneous and isotropic.
\begin{align*}
	\mathscr{L}_{0,k} =&\; \frac{1}{2\eps} \left[ - 3 (\matfett{\psi}_k')^2 + 6 (\eps-\delta)\mathcal{H} \matfett{\psi}_k' \matfett{\psi}_k + [(\eps - 3 \eps^2 + 6 \eps \delta - 3 \delta^2) \mathcal{H}^2 - k^2] \matfett{\psi}_k^2 \right]\\
	&\; + \frac{1}{2} \left[ (\lambda_k')^2 - 2 \mathcal{H} \lambda_k' \lambda_k + \left[\left(1- 3 \eps - 3 \delta + \eps \delta + \delta^2 - \frac{\delta'}{\mathcal{H}} \right) \mathcal{H}^2 - k^2 \right] \lambda_k^2 \right]\\
	&\; + 4 \mathcal{H} \matfett{\psi}_k' \lambda_k + 2(1- 2 \eps + \delta) \mathcal{H}^2 \matfett{\psi}_k \lambda_k
\end{align*}
Note that the mode Lagrangian is not the Fourier transform of the Lagrangian, as a Fourier transformation would not map products of functions to products of their Fourier transforms but to convolutions. The spatial integral of the Lagrangian coincides with the mode integral of the mode Lagrangian, which means they lead to the same Lagrangian function and thus to the same dynamics.\footnote{This is essentially an application of the Fourier-Plancherel theorem.} Due to the fact that the dynamic equations do not mix modes, no mode-mixing occurs in the Lagrangian.

From the mode Lagrangian one calculates the canonical momentum modes
\begin{align*}
	\Pi_k &= \pabl{\mathscr{L}_{0,k}}{\matfett{\psi}_k'} = -\frac{3}{\eps} \matfett{\psi}_k' + 3\frac{\eps -\delta}{\eps} \mathcal{H} \matfett{\psi}_k + 4 \mathcal{H} \lambda_k\\
	\Leftrightarrow \quad \matfett{\psi}_k' &= -\frac{\eps}{3} \Pi_k + (\eps -\delta) \mathcal{H} \matfett{\psi}_k + \frac{4}{3} \eps \mathcal{H} \lambda_k \displaybreak[0]\\
	\kappa_k &= \pabl{\mathscr{L}_{0,k}}{\lambda_k'} = \lambda_k' - \mathcal{H} \lambda_k\\
	\Leftrightarrow \lambda_k' &= \kappa_k + \mathcal{H} \lambda_k
\end{align*}
which allows the derivation of the unconstrained Hamiltonian for the modes
\begin{align*}
	\mathscr{H}_{0,k} =&\; -\frac{\eps}{6} \Pi_k^2 + (\eps-\delta) \mathcal{H} \Pi_k \matfett{\psi}_k + \left( - \frac{1}{2} \mathcal{H}^2 + \frac{1}{2 \eps} k^2 \right) \matfett{\psi}_k^2 \\
	&\; + \frac{1}{2} \kappa_k^2 + \mathcal{H} \kappa_k \lambda_k + \left[\frac{1}{6} \left(- 7 \eps + 9 \delta - 3 \eps \delta - 3 \delta^2 + 3 \frac{\delta'}{\mathcal{H}} \right) \mathcal{H}^2 + k^2 \right] \lambda_k^2\\
	&\; + \frac{4}{3} \eps \mathcal{H} \Pi_k \lambda_k - 2(1 - \delta) \mathcal{H}^2 \matfett{\psi}_k \lambda_k
\end{align*}

The unconstrained phase space of Fourier transformed initial values is assumed to be $\big(\mathscr{S}(\mathbb{R}^+) \big)^{\times 4}$ in the following, which means that initial values are Schwartz functions in the spatial coordinates. The Hamiltonian density at fixed time is then a smooth functional $\mathscr{H} : \big(\mathscr{S}(\mathbb{R}^+) \big)^{\times 4} \lrar \mathscr{S}(\mathbb{R}^+)$. The initial value phase space for a single mode is $\mathbb{R}^4$ because the modes are simple functions of time. The mode Hamiltonian is thus a smooth function $\mathscr{H}_{0,k} : \mathbb{R}^4 \lrar \mathbb{R}$.

Any smooth phase space ``density'' $\mathscr{F} : \big(\mathscr{S}(\mathbb{R}^+) \big)^{\times 4} \lrar \mathscr{S}(\mathbb{R}^+)$ can be integrated to yield a smooth phase space ``function'' $F : \big(\mathscr{S}(\mathbb{R}^+) \big)^{\times 4} \lrar \mathbb{R}$. Furthermore, for a polynomial $f: \mathbb{R}^+ \times \mathbb{R}^4 \rar \mathbb{R}$ with $\forall k \in \mathbb{R}^+ : \, f(k,\vec{0}) = 0$, a smooth density $\mathscr{F}$ can be constructed as
\begin{align}
	\mathscr{F} \big( g_1, g_2, g_3, g_4 \big) (k) := f_k \big( g_1(k), g_2(k), g_3(k), g_4(k) \big)
\end{align}
where $\mathscr{F} \big( g_1, g_2, g_3, g_4 \big) \in \mathscr{S}(\mathbb{R}^+)$ if $g_1, g_2, g_3, g_4 \in \mathscr{S}(\mathbb{R}^+)$. Let us call the space of such polynomial densities $ \mathcal{D}_{\text{Pol},4} \big(\mathscr{S}(\mathbb{R}^+) \big) $ and the functions derived from such densities by integration as $ \mathcal{F}_{\text{Pol},4} \big(\mathscr{S}(\mathbb{R}^+) \big) $. As such polynomials encompass the mode Hamiltonian and the field modes, this suffices for our purpose. It should be noted that for a general smooth function $f: \mathbb{R}^+ \times \mathbb{R}^4 \rar \mathbb{R}$ with $\forall k \in \mathbb{R}^+ : \, f(k,\vec{0}) = 0$ a smooth function $F : \big(\mathscr{S}(\mathbb{R}^+) \big)^{\times 4} \lrar \mathbb{R}$ does not necessarily exist.

Now consider the Poisson bracket on a single mode
\begin{align}
	\{,\}_k : C^\infty \big( \mathbb{R}^4 \rar \mathbb{R} \big) \times C^\infty \big( \mathbb{R}^4 \rar \mathbb{R} \big) \lrar \mathbb{R}
\end{align}
The Poisson bracket for a phase space of several modes is always diagonal in the modes due to the symmetry of the dynamic system. It is obvious that this bracket can be lifted to densities created from polynomials as
\begin{align}
	\{,\}_{\mathcal{D}} &: \mathcal{D}_{\text{Pol},4} \big(\mathscr{S}(\mathbb{R}^+) \big) \times \mathcal{D}_{\text{Pol},4} \big(\mathscr{S}(\mathbb{R}^+) \big) \lrar \mathscr{S}(\mathbb{R}^+) \nonumber\\
	\{\mathscr{F},\mathscr{G}\}_{\mathcal{D}}(k) &:= \{\mathscr{F}(k),\mathscr{G}(k)\}_k
\end{align}
and thus a Poisson bracket on $ \mathcal{F}_{\text{Pol},4} \big(\mathscr{S}(\mathbb{R}^+) \big) $ is defined as
\begin{align}
	\{ F, G\} = \int_0^\infty \limits \{\mathscr{F},\mathscr{G} \}_{\mathcal{D}}(k) 4 \pi k^2 dk = \int_0^\infty \limits \{\mathscr{F}_k,\mathscr{G}_k \}_k 4 \pi k^2 dk
\end{align}
Note that we use the three dimensional volume element $4 \pi k^2 dk$ to account for the fact that $\vec{k}$ is actually a three dimensional vector. Only one integral is performed because the mode Poisson bracket is diagonal in the modes.

As the canonical variables of the unconstrained Hamiltonian are the Darboux coordinates of the canonical symplectic form on the unconstrained phase space, the symplectic form corresponding to the mode Poisson bracket takes the standard form
\begin{align*}
	\forall \eta : \; (\omega_k^{\alpha\beta})_{\alpha,\beta}(\eta) = \left( \begin{array}{cc} 0_2 & \id_2 \\ - \id_2 & 0_2 \end{array} \right)
\end{align*}
on the linear space of first derivatives of mode functions $\mathscr{F}_k : \mathbb{R}^4 \lrar \mathbb{R}$.

For the $k \neq 0$ modes the constraint equation implies a projection to a subspace of the phase space. Contrary to the projection to an energy hypersurface this projection constrains the accessible phase space for all solutions of the dynamic system so it can be interpreted as a genuine reduction of phase space. In canonical variables the primary constraint derived from equation \eqref{constr1}  reads
\begin{align}
	\mathcal{C}_1 = \Pi_k - \mathcal{H} \lambda_k = 0 \label{CM1}
\end{align}
and a secondary constraint can be derived as
\begin{align}
	\mathcal{C}_2 := - \frac{1}{\mathcal{H}} \{ \mathcal{C}_1, \mathscr{H}_{0,k} \} - \frac{1}{\mathcal{H}} \pabl{\mathcal{C}_1}{\eta} = \kappa_k - \frac{\eps\mathcal{H}^2 - k^2}{\eps \mathcal{H}} \matfett{\psi}_k + \delta \mathcal{H} \lambda_k = 0 \label{CM2}
\end{align}
using the first constraint. The tertiary constraint vanishes
\begin{align*}
	\mathcal{C}_3 = \{ \mathcal{C}_2, \mathscr{H}_{0,k} \} + \pabl{\mathcal{C}_2}{\eta} = 0
\end{align*}
applying the first two constraints, so there are exactly two geometrically independent constraints, which means that the phase space dimension is reduced by two. It is worth emphasizing that these constraints are satisfied by all valid solutions of the dynamic system, so all solutions of the mode dynamic system can be described as living on a two dimensional reduced phase space.

For the treatment of constraints it is of great importance whether they are first or second class constraints. In the case discussed here, only second class constraints occur, due to the following lemma.
\begin{Lemma} $ $ \\
	The constraints \eqref{CM1} and \eqref{CM2} are second class constraints for $k \neq 0$. For $k \rar 0$ the Dirac form diverges no stronger than of order $k^{-2}$.
\end{Lemma}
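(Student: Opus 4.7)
The plan is straightforward: compute the $2\times 2$ matrix $C_{ij} := \{\mathcal{C}_i, \mathcal{C}_j\}$ of Poisson brackets directly and show it is invertible iff $k \neq 0$, then read off the $k$-scaling of the inverse, which controls the singular behaviour of the Dirac form. Recall that by definition two constraints are second class precisely when the matrix $C_{ij}$ is non-degenerate, while first class constraints are characterised by $C_{ij}$ being (weakly) zero; so the whole statement reduces to evaluating one off-diagonal entry.

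Concretely, I would use that the only non-trivial canonical brackets on a single mode are $\{\matfett{\psi}_k, \Pi_k\}_k = 1$ and $\{\lambda_k, \kappa_k\}_k = 1$. Inspecting \eqref{CM1} and \eqref{CM2}, only the $\Pi_k$-term in $\mathcal{C}_1$ can pair with the $\matfett{\psi}_k$-term in $\mathcal{C}_2$, and only the $\lambda_k$-term in $\mathcal{C}_1$ can pair with the $\kappa_k$-term in $\mathcal{C}_2$. Collecting these two contributions gives
\begin{align*}
\{\mathcal{C}_1, \mathcal{C}_2\}_k = -\frac{\eps\mathcal{H}^2 - k^2}{\eps\mathcal{H}} \cdot \{\Pi_k,\matfett{\psi}_k\}_k - \mathcal{H}\cdot\{\lambda_k,\kappa_k\}_k = \frac{\eps\mathcal{H}^2 - k^2}{\eps\mathcal{H}} - \mathcal{H} = -\frac{k^2}{\eps\mathcal{H}},
\end{align*}
so $C_{ij}$ has determinant $k^4/(\eps\mathcal{H})^2$, which is non-zero for every $k \neq 0$. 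This already proves the second class property and, by antisymmetry, that $\mathcal{C}_3$ had to vanish by itself, consistent with the calculation preceding the lemma.

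For the second half of the statement, the Dirac form on a mode is $\omega_k^D = \omega_k - \omega_k(\partial \mathcal{C}_i)\,(C^{-1})^{ij}\,\omega_k(\partial \mathcal{C}_j)$, so the degree of divergence as $k \to 0$ is entirely controlled by $C^{-1}$. Since $C$ has a single independent entry proportional to $k^2$, its inverse is proportional to $k^{-2}$, namely
\begin{align*}
(C^{-1})^{ij} = \frac{\eps\mathcal{H}}{k^2} \begin{pmatrix} 0 & 1 \\ -1 & 0 \end{pmatrix},
\end{align*}
so each entry of the correction term in $\omega_k^D$ can be bounded by a constant multiple of $k^{-2}$ times polynomial data in $\eps, \delta, \mathcal{H}$. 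This gives exactly the claimed $k^{-2}$ bound on the divergence of the Dirac form as $k \to 0$, which matches the earlier observation that the zero mode is unconstrained and has to be dealt with separately.

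The only place where any real thought is required is the constraint bracket computation, but that is just a two-line calculation; the scaling statement is a by-product of the explicit form of $C^{-1}$. I do not foresee a genuine obstacle, as long as one is careful with signs and with the fact that $\{\Pi_k,\matfett{\psi}_k\}_k = -1$ rather than $+1$.
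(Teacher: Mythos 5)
Your proof is correct and follows essentially the same route as the paper: both compute the single nontrivial bracket $\{\mathcal{C}_1,\mathcal{C}_2\}_k=-k^2/(\eps\mathcal{H})$, observe that the constraint matrix is therefore invertible precisely for $k\neq 0$, and read off the $k^{-2}$ scaling of the Dirac form from $C^{-1}$. You additionally spell out that the divergence is entirely carried by the $C^{-1}$ factor in $\omega_k^{\mathrm{D}}=\omega_k-\omega_k(\partial\mathcal{C}_i)(C^{-1})^{ij}\omega_k(\partial\mathcal{C}_j)$, which the paper leaves implicit but is the same reasoning.
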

\begin{proof} $ $ \\
	To fix notation let $(M_{ab})_{a,b}$ denote the matrix with components $M_{ab}$. If the constraint matrix $(C_{ab})_{a,b}:=\big(\{\mathcal{C}_a,\mathcal{C}_b\}\big)_{a,b}$ is invertible, the constraints are second class constraints. Using the definition of the Poisson bracket
	\begin{align*}
		\{f,g\} &:= \pabl{f}{\matfett{\psi}_k} \pabl{g}{\Pi_k} + \pabl{f}{\lambda_k} \pabl{g}{\kappa_k} - \pabl{f}{\Pi_k} \pabl{g}{\matfett{\psi}_k} - \pabl{f}{\kappa_k} \pabl{g}{\lambda_k}
	\end{align*}
	one gets
	\begin{align*}
		C_{12} = \{ \mathcal{C}_1, \mathcal{C}_2 \} = \frac{\eps\mathcal{H}^2 - k^2}{\eps \mathcal{H}} - \mathcal{H} = - \frac{k^2}{\eps \mathcal{H}}
	\end{align*}
	which yields, due to antisymmetry
	\begin{align*}
		(C_{ab})_{a,b} = \frac{k^2}{\eps \mathcal{H}} \left( \begin{array}{cc} 0 & -1 \\ 1 & 0 \end{array} \right)
	\end{align*}
	which is invertible as long as $k \neq 0$. As the inverse diverges as $k^{-2}$, also the second part of the claim is true.
\end{proof}

First class constraints usually occur in the context of gauge theories so one might expect that a full treatment of scalar perturbation theory, not using gauge invariant variables, might be significantly more complicated, especially when it comes to quantization. Such a more extensive investigation is not attempted in the present work.

In the present case, the constraints would only be first class constraints for the zero mode $k=0$, where they do not exist as pointed out above. Thus a complete formulation of the Dirac bracket on the full phase space of all modes is possible in a generic way.

The Dirac bracket is derived from the Poisson bracket and the constraint matrix as
\begin{align*}
	\{f,g\}_{\text{D}} := \{f,g\} - \{f,\mathcal{C}_i\} (C^{-1})_{ij} \{\mathcal{C}_j,g\}
\end{align*}
and a straightforward calculation yields the mode Dirac form
\begin{align}
	(\omega_{\text{D},k}^{\alpha\beta})_{\alpha,\beta} = \left( \begin{array}{cccc} 0 & \frac{\eps\mathcal{H}}{k^2} & \frac{\eps\mathcal{H}^2}{k^2} & - \frac{\eps\delta\mathcal{H}^2}{k^2} \\ -\frac{\eps\mathcal{H}}{k^2} & 0 & 0 & 1 - \frac{\eps\mathcal{H}^2}{k^2} \\ -\frac{\eps\mathcal{H}^2}{k^2} & 0 & 0 & \mathcal{H} - \frac{\eps\mathcal{H}^3}{k^2} \\ \frac{\eps\delta\mathcal{H}^2}{k^2} & - 1 +  \frac{\eps\mathcal{H}^2}{k^2} & - \mathcal{H} + \frac{\eps\mathcal{H}^3}{k^2} & 0 \end{array} \right) \label{DiracForm}
\end{align}
which obviously diverges as $k^{-2}$ for $k\rar 0$. The Dirac bracket on all modes is defined analogously to the Poisson bracket on all modes. On the linear space of first derivatives, we cast it in the symbolic form
\begin{align}
	\left\{ \left( \begin{array}{c} f_1 \\ f_2 \\ f_3 \\ f_4 \end{array} \right) , \left( \begin{array}{c} g_1 \\ g_2 \\ g_3 \\ g_4 \end{array} \right) \right\}_{\text{D}} := \int \left\{ \left( \begin{array}{c} f_1(\vec{k}) \\ f_2(\vec{k}) \\ f_3(\vec{k}) \\ f_4(\vec{k}) \end{array} \right) , \left( \begin{array}{c} g_1(\vec{k}) \\ g_2(\vec{k}) \\ g_3(\vec{k}) \\ g_4(\vec{k}) \end{array} \right) \right\}_{\text{D},k} d^3 k \label{IntDiracDef}
\end{align}
where all the functions $f_i$ and $g_i$, symbolizing the values of first derivatives of smooth densities $\mathscr{F} : \big(\mathscr{S}(\mathbb{R}^+) \big)^{\times 4} \lrar \mathscr{S}(\mathbb{R}^+)$, are Schwartz functions in $k$. As the mode Dirac form given in equation \eqref{DiracForm} behaves like $k^{-2}$ for $k\rar 0$ and the volume element of the integral in equation \eqref{IntDiracDef} gives a factor of $k^2$ the limit of the integrand for $k\rar 0$ is finite. As the constraint does not exist in the zero mode, the integrand is finite at $k=0$ and does not contribute to the integral. Thus the integrand is a Schwartz function and the integral is well defined.

The remarkable fact that the $k=0$ mode does not contribute to the structure of the dynamics on the phase space can be heuristically traced back to the fact, that only Schwartz initial values are considered in the present work. This prevents singularities at $k=0$, which would boost the importance of the zero mode. The restriction to Schwartz initial values does not pose technical obstacles to a meaningful quantum field description, therefore this restriction is considered acceptable here.

As a last step, the constraints can be factored out of the phase space, such that the resultant factor space can serve as physical phase space. This procedure will be investigated to some more depth in the context of quantization.

\section{Preferred Dynamic Variables} \label{sec:dynvar}

We now turn to investigating which canonical variables are suited for quantization of the dynamic system with constraint. The canonical variables can be chosen freely and only need to satisfy the requirement $\{X,P\}_{\text{D}} =1 $. This freedom implies the application of the constraints as well as symplectomorphisms. To make a connection to section \ref{sec:symfor}, the possible canonically conjugate variables $P$ for $X = u_k = \matfett{\psi}_k + \lambda_k$ and for $X = \tfrac{1}{z} \matfett{\psi}_k$ will be identified here.

Considering $X = u_k$ the corresponding conjugate variable has to satisfy
\begin{align*}
	-\frac{\eps \mathcal{H}}{k^2} \pabl{P}{\matfett{\psi}_k} + \frac{\eps \mathcal{H}}{k^2} \pabl{P}{\lambda_k} + \frac{\eps \mathcal{H}^2}{k^2} \pabl{P}{\Pi_k} + \frac{k^2 - \eps \mathcal{H}^2 - \eps \delta \mathcal{H}^2}{k^2} \pabl{P}{\kappa_k} = 1
\end{align*}
which is indeed fulfilled by
\begin{align*}
	P = u_k' = (\eps - \delta) \mathcal{H} \matfett{\psi}_k + \left( 1 + \frac{4}{3} \eps \right) \lambda_k - \frac{\eps}{3} \Pi_k + \kappa_k
\end{align*}
which implies that the standard quantization procedure is indeed justified.

For $X = \tfrac{1}{z} \matfett{\psi}_k$ the corresponding conjugate variable has to satisfy
\begin{align*}
	\frac{\eps \mathcal{H}}{z k^2} \pabl{P}{\lambda_k} + \frac{\eps \mathcal{H}^2}{z k^2} \pabl{P}{\Pi_k} - \frac{- \eps \delta \mathcal{H}^2}{z k^2} \pabl{P}{\kappa_k} = 1
\end{align*}
which can be fulfilled by
\begin{align*}
	P = \alpha X' = - \frac{\alpha \mathcal{H}}{z} \matfett{\psi}_k + \frac{4}{3} \frac{\alpha \eps \mathcal{H}}{z} \lambda_k - \frac{\alpha \eps}{3 z} \Pi_k
\end{align*}
namely if $\alpha = \tfrac{z^2 k^2}{\eps^2 \mathcal{H}^2}$. This result is peculiar in that it includes a prefactor of $k^2$. A comparison with equation \eqref{CCRrel} shows that the treatment of this subsection reproduces the result of the previous subsection with respect to the relation of the commutators (or in the classical setting Dirac brackets). However, the treatment in the present subsection allows to take the point of view that the standard treatment of quantizing $u$ is canonical in the sense of Dirac quantization, while the straightforward CCR quantization of $\Psi$ is not.

As the fields $\matfett{\psi}$ and $\lambda$ are scalar, it seems a generic assumption, that a reasonable canonical variable should have a scalar field interpretation in the sense that it satisfies a Klein-Gordon-type equation. Two simple technical requirements which imply a scalar field interpretation justify the choice of the field $u$ as canonical variable as can be seen in the following theorem. 

\begin{Theorem} \label{theorem:canvar} $ $ \\
	Let $A$ and $B$ be smooth real-valued functions of time but independent of $k$ and let $B$ be strictly positive. Assume $X_k = A \matfett{\psi}_k + B \lambda_k$ then $B=1$ and $A=1$ or $A=1 + \frac{(\eps - \delta)(1 + \eps - \delta)}{\eps (1+ \eps)}$ are the only solutions compatible the technical requirements
	\begin{enumerate}
		\item $P_k =X'_k$ with $\{X_k,P_k\}_{\text{D}} =1$
		\item the dynamic equation for $X$ is a differential equation of at most second order.
	\end{enumerate}
\end{Theorem}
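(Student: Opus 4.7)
The plan is to exploit the fact that, once $X_k = A\matfett{\psi}_k + B\lambda_k$ is fixed, both $P_k = X_k'$ and the Dirac bracket $\{X_k,P_k\}_{\text{D}}$ become explicitly computable linear-algebraic expressions in $A$, $B$, their time derivatives, and the background functions $\eps,\delta,\mathcal{H}$. Requirement (1) will pin down $B$ and give a first-order ODE for $A$, while requirement (2) supplies a further algebraic condition that singles out exactly two of its solutions.

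First I would express $P_k = A'\matfett{\psi}_k + A\matfett{\psi}_k' + B'\lambda_k + B\lambda_k'$ as a function on the full phase space by substituting the Hamilton equations $\matfett{\psi}_k' = -\tfrac{\eps}{3}\Pi_k + (\eps-\delta)\mathcal{H}\matfett{\psi}_k + \tfrac{4}{3}\eps\mathcal{H}\lambda_k$ and $\lambda_k' = \kappa_k + \mathcal{H}\lambda_k$ obtained from $\mathscr{H}_{0,k}$. Contracting the gradients of $X_k$ and $P_k$ against the Dirac form \eqref{DiracForm} produces a bracket of the form
\begin{align*}
\{X_k,P_k\}_{\text{D}} = B^2 + \frac{\eps\mathcal{H}}{k^2}\bigl\{AB' - A'B + \bigl[AB(1-\eps) + A^2\eps - B^2\bigr]\mathcal{H}\bigr\}.
\end{align*}
For requirement (1) to hold for all $k$, the $k$-independent part forces $B^2 = 1$, hence $B=1$ by positivity, and the $k^{-2}$ part collapses to the ODE $A' = \mathcal{H}(A-1)(A\eps+1)$.

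Next I would turn to requirement (2). Using the constraints \eqref{CM1}, \eqref{CM2} to eliminate $\Pi_k$ and $\kappa_k$, the reduced Hamilton equations deliver $\matfett{\psi}_k'$ and $\lambda_k'$ as linear combinations of $\matfett{\psi}_k$ and $\lambda_k$ with coefficients polynomial of degree at most one in $k^2$. Differentiating $X_k = A\matfett{\psi}_k + \lambda_k$ once gives $X_k'$ in the same basis, and inverting the change of variables $(\matfett{\psi}_k,\lambda_k) \leftrightarrow (X_k, X_k')$ allows me to express $X_k''$ as $\alpha(\eta,k)X_k + \beta(\eta,k)X_k'$. The Dirac ODE for $A$ simplifies key intermediate identities, most notably forcing the $k$-independent part of the $\matfett{\psi}_k$-coefficient of $X_k'$ to equal $A$ times the $\lambda_k$-coefficient. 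Requirement (2) is then imposed by demanding $\alpha$ and $\beta$, as rational functions of $k^2$, reduce to polynomials of degree one and zero respectively, so that the equation for $X$ is a Klein-Gordon type PDE of order at most two in both time and space.

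The main obstacle will be the last step: extracting the residual algebraic condition on $A$ that survives after the Dirac ODE has been used. After substitution, the problematic $1/k^2$ coefficient in $\alpha$ collapses to an expression proportional to $A' - \mathcal{H}(A-1)(A\eps+1)$, which vanishes, while the $k^0$ piece yields a nontrivial polynomial relation in $A$, $\eps$, $\delta$ (and their derivatives through $\eps' = 2\eps(\eps-\delta)\mathcal{H}$). Solving this relation jointly with the Dirac ODE should leave precisely the two values $A=1$ and $A = 1 + \frac{(\eps-\delta)(1+\eps-\delta)}{\eps(1+\eps)}$ claimed in the theorem.
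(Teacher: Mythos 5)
Your proposal follows essentially the same route as the paper's proof. Step one is identical: you contract the gradients of $X_k$ and $P_k = X_k'$ against the Dirac form \eqref{DiracForm}, obtain the same expression $\{X_k,P_k\}_{\text{D}} = B^2 + \tfrac{\eps\mathcal{H}}{k^2}[AB'-A'B + (AB(1-\eps)+A^2\eps - B^2)\mathcal{H}]$, read off $B=1$ from $k$-independence, and arrive at the Bernoulli equation $A' = \mathcal{H}(A-1)(A\eps+1)$. Step two is the same idea as the paper's, phrased slightly differently: the paper writes $X_k = u_{0,k} + \rho\matfett{\psi}_k$ with $\rho = A-1$, derives $X_k''$ in the basis $\{X_k,\matfett{\psi}_k\}$, and notes that the constraint $\matfett{\psi}_k = \tfrac{\eps\mathcal{H}}{k^2}(-X_k' + \cdots)$ turns any surviving $\matfett{\psi}_k$ term into an inadmissible $k^{-2}$ contribution; you instead invert the change of variables $(\matfett{\psi}_k,\lambda_k)\leftrightarrow(X_k,X_k')$ and impose polynomiality of the coefficients in $k^2$. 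These two formulations are the same calculation. (As a check, the transformation determinant reduces to $\det T = k^2/(\eps\mathcal{H})$ once the Dirac ODE is used, confirming your remark that part of the $k^{-2}$ singularity is killed by that ODE while a residual $k$-independent relation in $\rho$, namely $-\eps(1+\eps)\rho^2 + (\eps-\delta)(1+\eps-\delta)\rho = 0$, survives and yields the two roots.) Your sketch of the final step is a bit vague about where exactly the residual condition sits in the $k$-expansion, but the overall logic matches the paper and would go through.
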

\begin{proof} $ $\\
	Requirement (a) leads to the equation
	\begin{align}
		\frac{\eps \mathcal{H}}{k^2}(AB'-A'B) + \frac{k^2 -\eps \mathcal{H}^2}{k^2} B^2 + (1 -\eps) \frac{\eps \mathcal{H}^2}{k^2} AB + \frac{\eps^2 \mathcal{H}^2}{k^2} A^2 &= 1\\
		 \equi \eps \mathcal{H}(AB'-A'B) + (k^2 -\eps \mathcal{H}^2) B^2 + (1 -\eps) \eps \mathcal{H}^2 AB + \eps^2 \mathcal{H}^2 A^2 &= k^2 .
	\end{align}
	Because $B$ does not depend on $k$, $B^2 =1$ and thus $B =1$. This yields
	\begin{align}
		A' - \eps \mathcal{H} A^2 - (1 -\eps) \mathcal{H} A + \mathcal{H} = A' - (A - 1) (A \eps + 1) \mathcal{H} = 0
	\end{align}
	which can be cast in a simpler form using $D := A - 1$
	\begin{align}
		D' - D (D \eps + (1+\eps) ) \mathcal{H} = 0
	\end{align}
	such that one obtains a Bernoulli differential equation. This type of equation is solvable with trivial solution $D=0$ and a nontrivial solution which can be found by using $E = \tfrac{1}{D}$ leading to
	\begin{align}
		E' + (1+\eps) \mathcal{H} E = - \eps \mathcal{H}
	\end{align}
	a readily solvable linear equation. The solution for $A$ is
	\begin{align}
		A= \left( 1 - \frac{a^2}{\mathcal{H}} \left( \int_{\eta_0}^\eta \limits \eps (\eta') a^2(\eta') \, d \eta' + K \right)^{-1} \right)
	\end{align}
	where $K$ and $\eta_0$ account for one free constant.
	
	Now abbreviating $\rho := A-1$ one readily sees
	\begin{align}
		\rho' &= \eps \mathcal{H} \rho^2 + (1 + \eps) \mathcal{H} \rho \label{drho}\\
		\rho'' &= 2 \eps^2 \mathcal{H}^2 \rho^3 + 2 \eps (1 + \eps - \delta) \mathcal{H}^2 \rho^2 + 2 (1 + \eps + \eps^2 - \eps \delta) \mathcal{H}^2 \rho
	\end{align}
	
	Defining $u_{\rho,k} = u_{k,0} + \rho \matfett{\psi}_k$ the dynamic equation can be derived from
	\begin{align}
		u''_\rho = u''_0 + \rho'' \matfett{\psi} + 2 \rho' \matfett{\psi}' + \rho \matfett{\psi}''
	\end{align}
	using the dynamic equations
	\begin{align}
		u''_{0,k} &= \left[ \left( 2 + 2 \eps - 3 \delta + 2 \eps^2 - 3 \eps \delta + \delta^2 - \frac{\delta'}{\mathcal{H}} \right) \mathcal{H}^2 - k^2 \right] \\
		\matfett{\psi}''_k &= (2 + 2 \eps - 4 \delta) \mathcal{H} \matfett{\psi}'_k + \left[ \left( 2 \eps - \delta + 2 \eps^2 - \eps \delta - \delta^2 - \frac{\delta'}{\mathcal{H}} \right) \mathcal{H}^2 - k^2 \right] \matfett{\psi}_k
	\end{align}
	and the constraint
	\begin{align}
		\matfett{\psi}'_k = - (\eps \rho + \delta) \mathcal{H} \matfett{\psi}_k + \eps \mathcal{H} u_{\rho,k} .
	\end{align}
	It takes the form
	\begin{align}
		u''_{\rho,k} =&\! \left[ \left( 2 \eps^2 \rho^2 + 4 (1 + \eps - \delta) \eps \rho + 2 + 2 \eps - 3 \delta + 2 \eps^2 - 3 \eps \delta + \delta^2 - \frac{\delta'}{\mathcal{H}} \right) \mathcal{H}^2 - k^2 \right] u_{\rho,k} \nonumber\\
		&+ 2 \big[ -\eps (1 + \eps) \rho + (\eps - \delta)(1 + \eps - \delta) \big] \mathcal{H}^2 \rho \matfett{\psi}_k \label{urhodyneq}
	\end{align}
	
	As the secondary constraint can be cast in the form
	\begin{align}
		\matfett{\psi}_k = \frac{\eps \mathcal{H}}{k^2} \big( - u'_{\rho,k} + (\eps \rho + 1 + \eps - \delta) \mathcal{H} u_{\rho,k} \big)
	\end{align}
	one readily sees that the coefficient of $\matfett{\psi}_k$ in equation \eqref{urhodyneq} must vanish to satisfy requirement (b), as orders of $k$ translate to spatial derivatives. The condition $-\eps (1 + \eps) \rho^2 + (\eps - \delta)(1 + \eps - \delta) \rho =0$ implies the claim.
\end{proof}

Although $u = \matfett{\psi} + \lambda$ is not uniquely determined by theorem \ref{theorem:canvar}, it is clear that the alternative solution implies restrictions on $\eps$ and $\delta$. In this sense, $u$ is the unique suitable choice for all background configurations. We will point out one feature of the alternative solution here which is especially unwanted in the context of inflationary models.

\begin{Lemma} $ $\\
	If $\delta = \mathcal{O}(\eps)$, then $\rho = \frac{(\eps - \delta)(1 + \eps - \delta)}{\eps (1+ \eps)}$ implies $(\eps - \delta)' = (\eps - \delta) \mathcal{H} + \mathcal{O}(\eps^2)$.
\end{Lemma}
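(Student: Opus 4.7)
The plan is to feed the explicit formula $\rho = \mu(1+\mu)/[\eps(1+\eps)]$, where I abbreviate $\mu := \eps - \delta$, back into the Bernoulli equation \eqref{drho}, namely $\rho' = \eps\mathcal{H}\rho^2 + (1+\eps)\mathcal{H}\rho$, and then to expand everything consistently in the smallness parameter $\eps$.

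Before doing so I would record one auxiliary identity that drives the whole estimate: using $\eps = 4\pi G z^2/a^2$ from \eqref{defeps} together with $z'/z = (1+\eps-\delta)\mathcal{H}$ (just the definition \eqref{defdelta} of $\delta$ solved for $z'/z$), logarithmic differentiation yields $\eps'/\eps = 2 z'/z - 2\mathcal{H} = 2(\eps-\delta)\mathcal{H}$. Hence $\eps' = 2\eps\mu\mathcal{H}$, which is already $\mathcal{O}(\eps^2)$ under the hypothesis $\delta = \mathcal{O}(\eps)$.

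Next I would differentiate the relation $\eps(1+\eps)\rho = \mu(1+\mu)$ to obtain $\mu'(1+2\mu) = \eps'(1+2\eps)\rho + \eps(1+\eps)\rho'$, substitute the Bernoulli equation for $\rho'$, and pull the factor $\eps(1+\eps)\rho = \mu(1+\mu)$ through, arriving at
\begin{align*}
\mu'(1+2\mu) = \eps'(1+2\eps)\rho + \frac{\mathcal{H}\mu^2(1+\mu)^2}{1+\eps} + (1+\eps)\mathcal{H}\mu(1+\mu) .
\end{align*}
Since $\mu = \mathcal{O}(\eps)$ and the explicit expression for $\rho$ is $\mathcal{O}(1)$, the auxiliary identity makes the first term $\mathcal{O}(\eps^2)$; the second is manifestly $\mathcal{O}(\mu^2) = \mathcal{O}(\eps^2)$; and the third equals $\mathcal{H}\mu + \mathcal{O}(\eps^2)$. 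Hence $\mu'(1+2\mu) = \mathcal{H}\mu + \mathcal{O}(\eps^2)$, which in turn forces $\mu' = \mathcal{O}(\eps)$, so that the $2\mu\mu'$ on the left-hand side is itself $\mathcal{O}(\eps^2)$ and can be absorbed into the remainder to give the claim $(\eps-\delta)' = (\eps-\delta)\mathcal{H} + \mathcal{O}(\eps^2)$.

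The only non-routine ingredient is the identity $\eps' = 2\eps(\eps-\delta)\mathcal{H}$: without it, the $\eps'$ produced by differentiating the rational expression for $\rho$ would at best be $\mathcal{O}(\eps)$ and would destroy the $\mathcal{O}(\eps^2)$ error bound in the first term. Everything beyond that is straightforward bookkeeping of orders in $\eps$.
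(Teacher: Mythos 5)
Your proof is correct and follows essentially the same route as the paper: both start from the Bernoulli equation \eqref{drho} and rely on the auxiliary identity $\eps'/\eps = 2(\eps-\delta)\mathcal{H}$ (obtained from \eqref{defeps} and \eqref{defdelta}), which the paper uses implicitly when it recasts the logarithmic derivative of $\rho$ into a closed equation for $\gamma'$, and both then expand in $\eps$. You differentiate the polynomial identity $\eps(1+\eps)\rho = \mu(1+\mu)$ instead of $\ln\rho$, and you make the crucial $\eps'$ identity explicit, which is a small but welcome clarification of a step the paper leaves unstated.
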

\begin{proof} $ $\\
	Using relation \eqref{drho} one can derive
	\begin{align}
		- \frac{\eps'}{\eps} - \frac{\eps'}{1+\eps} + \frac{(\eps-\delta)'}{\eps-\delta} + \frac{(\eps-\delta)'}{1+\eps-\delta} = \frac{\rho'}{\rho} = \frac{(\eps-\delta)(1+\eps-\delta)}{1+\eps}\mathcal{H} + (1+\eps) \mathcal{H}
	\end{align}
	which can be recast using $\gamma : = \eps - \delta$ to yield
	\begin{align}
		\gamma' = \frac{\gamma(1+\gamma)\big(\gamma(1+\gamma) + (1+\eps)^2 + 2\gamma(1+2\eps)\big)}{(1+\eps)(1+2\gamma)} \mathcal{H}
	\end{align}
	The claim follows by Taylor expansion of this expression.
\end{proof}

If one invokes the slow roll assumption and assumes the higher order terms to be small, one is left with a readily solvable differential equation, that yields $\eps - \gamma \propto a$. Thus, contrary to the assumption, $\eps$ and $\gamma$ are not small and slowly varying, which shows that the restrictions on $\eps$ and $\delta$ imposed by the alternative solution are incompatible with slow roll inflationary models. We will therefore discard this solution and regard $X=\matfett{\psi} + \lambda$ as a preferred solution. This result gives a justification, from a more structural point of view, for the usual choice of canonical variable in the standard literature on cosmology.

To generalize the result presented here to a slightly more general description of a scalar field, one might relax the first requirement by allowing $P_k = D X'_k$ where $D$ is some smooth function of time, but does not depend on $k$. One would expect that this introduces at least a freedom of the form $X=F \cdot (\matfett{\psi} + \lambda)$ for the preferred solution where $F$ is some smooth function of time.

The standard approach to quantization assuming CCR for the field $u = \lambda + \matfett{\psi} = a \chi + z \Psi$ was shown to be justified. It is thus interesting to investigate what the CCR quantization of $u$ implies for the commutation relations of $\Psi$ and $\chi$.
\begin{Lemma} $ $ \\
	If standard CCR for $u$ are imposed, $\Psi$ and $\chi$ are non-local.
\end{Lemma}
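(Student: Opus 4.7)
The plan is to reduce the CCR hypothesis for $u$ to an explicit equation for the equal-time kernel of $[\matfett{\psi}(x),\matfett{\psi}'(y)]$ and to show that this kernel is not supported on the diagonal. First I would use the primary constraint \eqref{CM1} in the form $\lambda = (\eps\mathcal{H})^{-1}\matfett{\psi}' + \tfrac{\delta-\eps}{\eps}\matfett{\psi}$ to write
\begin{align*}
u \;=\; \matfett{\psi} + \lambda \;=\; A(\eta)\,\matfett{\psi} + B(\eta)\,\matfett{\psi}', \qquad A=\tfrac{\delta}{\eps},\quad B=\tfrac{1}{\eps\mathcal{H}} ,
\end{align*}
a relation local in $\vec{x}$ since $A$ and $B$ depend on time only. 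Differentiating in $\eta$ and substituting for $\matfett{\psi}''$ via the mode equation \eqref{E01} rescaled by $z$ produces
\begin{align*}
u' \;=\; \alpha(\eta)\,\matfett{\psi} + \beta(\eta)\,\matfett{\psi}' + B(\eta)\,\triangle\matfett{\psi} ,
\end{align*}
where the single $\triangle\matfett{\psi}$ term originates from the $-k^2\matfett{\psi}_k$ piece of the mode dynamics and is the only spatially non-trivial contribution; the coefficients $\alpha,\beta$ depend on time only.

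Next I would adopt the homogeneous and isotropic Ansatz $[\matfett{\psi}(x),\matfett{\psi}(y)]|_{x^0=y^0}=[\matfett{\psi}'(x),\matfett{\psi}'(y)]|_{x^0=y^0}=0$ and $[\matfett{\psi}(x),\matfett{\psi}'(y)]|_{x^0=y^0}=f(\vec{x}-\vec{y})$ with $f$ even (as in Lemma \ref{Lemma:CCRNoGo}), and compute, writing $\vec{r}=\vec{x}-\vec{y}$,
\begin{align*}
[u(x),u'(y)]|_{x^0=y^0} \;=\; (A\beta-B\alpha)\,f(\vec{r}) \;-\; B^2\,\triangle f(\vec{r}) .
\end{align*}
Imposing the standard CCR $[u(x),u'(y)]|_{x^0=y^0}=i\delta(\vec{r})$ then forces the screened Poisson equation
\begin{align*}
\bigl[\,(A\beta-B\alpha) - B^2\triangle\,\bigr]\,f(\vec{r}) \;=\; i\,\delta(\vec{r}) .
\end{align*}
Because $B=(\eps\mathcal{H})^{-1}$ does not vanish during inflation, the Laplacian term is genuinely present and the unique tempered solution is the associated Green's function: a Yukawa kernel $f(\vec{r})\propto|\vec{r}|^{-1}e^{-m|\vec{r}|}$ (degenerating to the Coulomb kernel $|\vec{r}|^{-1}$ when $A\beta-B\alpha=0$). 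In either case $f$ is smooth away from $\vec{r}=0$ and manifestly not supported on the diagonal, so $\matfett{\psi}$ itself is non-local.

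The statement for the gauge invariant fields then follows at once. Since $\Psi=\matfett{\psi}/z$ one reads off $[\Psi(x),\Psi'(y)]|_{x^0=y^0}=z^{-2}f(\vec{r})$, which inherits the Yukawa non-locality; combining the constraint \eqref{constr1} with Lemma \ref{Lemma:CCRNoGo} expresses $[\chi(x),\chi'(y)]|_{x^0=y^0}$ as a second-order differential operator applied to $f$, again non-local. The main obstacle I expect is the careful bookkeeping in the first step: one must verify that no residual $k$-dependence is hidden in $\alpha$ or $\beta$ after elimination of $\matfett{\psi}''$, so that the spatial non-triviality of $u'$ is indeed concentrated entirely in the $B\triangle\matfett{\psi}$ term and the Fourier inversion collapses to a single screened Poisson problem. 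Once this is checked, non-locality of the Green's function of a non-degenerate second-order elliptic operator is standard.
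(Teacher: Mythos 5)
Your argument follows essentially the same route as the paper: use the constraints to express $u$ and $u'$ through the Bardeen field and its time derivative, observe that $u'$ picks up exactly one $\triangle$ term, and impose the CCR for $u$ to obtain an elliptic equation whose Green's function is the nonlocal equal-time commutator kernel, after which Lemma \ref{Lemma:CCRNoGo} handles $\chi$. One small refinement worth noting: the coefficient $A\beta - B\alpha$ in fact vanishes identically --- the paper's relation $u' = \tfrac{z}{\eps\mathcal{H}}\triangle\Psi + (1+\eps-\delta)\mathcal{H}\,u$, combined with $[u(x),u(y)]\big|_{x^0=y^0}=0$, makes this manifest --- so the equation is a pure Poisson equation and the kernel is the Coulomb potential $\propto|\vec{x}-\vec{y}|^{-1}$ rather than a Yukawa kernel, although, as you correctly observe, nonlocality holds in either case.
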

\begin{proof} $ $ \\
	As a first step we assume at least a basic consequence of locality to hold,
	\begin{align}
		\left[\Psi(x), \Psi(y) \right]\big|_{x^0=y^0} &= \left[\Psi'(x), \Psi'(y) \right]\big|_{x^0=y^0} = 0 .
	\end{align}
	Using equations \eqref{constr1} and \eqref{constr2} the following relations can be derived
	\begin{align}
		u =& \frac{z}{\eps \mathcal{H}} \Psi' + \frac{1+ \eps}{\eps} z \Psi \displaybreak[0]\nonumber\\
		u' =& \frac{z}{\eps \mathcal{H}} \triangle \Psi + (1+\eps-\delta) \mathcal{H} u \displaybreak[0]\nonumber\\
		[u(x), u'(y)]\big|_{x^0=y^0} =& - \frac{z^2}{\eps^2 \mathcal{H}^2} \triangle_y [\Psi(y), \Psi'(x)]\big|_{x^0=y^0} \label{CCRrel}
	\end{align}
	Now the last relation can be recast in the form
	\begin{align*}
		\frac{i z^2}{\eps^2 \mathcal{H}^2} \triangle_y [\Psi(y), \Psi'(x)]\big|_{x^0=y^0} =& \delta(\vec{x}-\vec{y}) = \delta(\vec{y}-\vec{x})
	\end{align*}
	which is the distributional differential equation for the fundamental solution of the Laplacian. This is straightforward to solve and yields
	\begin{align*}
		[\Psi(x), \Psi'(y)]\big|_{x^0=y^0} =& \frac{i \eps^2 \mathcal{H}^2}{4\pi z^2} \frac{1}{|\vec{x}-\vec{y}|}
	\end{align*}
	such that the result of lemma \ref{Lemma:CCRNoGo} can be readily applied
	\begin{align}
		\left[\chi(x), \chi(y) \right]\big|_{x^0=y^0} &= \left[\chi'(x), \chi'(y) \right]\big|_{x^0=y^0} = 0\\
		\left[\chi(x), \chi'(y) \right]\big|_{x^0=y^0} &= \frac{i}{a^2} \delta(\vec{x}-\vec{y}) - \frac{i \eps \mathcal{H}^2}{4\pi a^2} \frac{1}{|\vec{x}-\vec{y}|}
	\end{align}
This proves the claim.
\end{proof}

This result has intriguing implications, as $\Psi$ and $\chi$ have a clear physical interpretation but the ``natural'' quantization procedure suggests that they are non-local fields. Such a situation may point to an underlying non-commutativity of spacetime, as the cause of this non-locality. It is worth noting that ``non-commutative inflation'', i.e. inflationary cosmological scenarios based on a non-commutativity of spacetime, is a topic of evolving interest during the past years, see e.g. \cite{ABM}\cite{BK}\cite{DPR}\cite{MO}. The above result may be interpreted as indicating that it is ``natural'' to assume spacetime non-commutativity in inflationary scenarios.

\section{Weyl Quantization} \label{sec:weylq}

In this section the algebraic Weyl quantization procedure is applied to the dynamic system of perturbations. This application of a rigorous quantization scheme serves to clarify the choice of quantum state, which is obfuscated by the generic treatment using creation an annihilation operators. We are not aware that this approach has been applied to the system at hand, but the application is straightforward and not original. However, it serves to identify a possible target for further investigation.

As a general reference for algebraic methods in quantum field theory see \cite{Haa}, for the application of Weyl quantization in quantum field theory on curved spacetime see \cite{Wal94}. An application of the Weyl algebra in the context of loop quantum gravity is presented in \cite{Thi}. The Weyl quantization procedure has the benefit that the operators involved are bounded if represented on a Hilbert space, which makes it possible to define a C${}^*$ norm on the Weyl algebra.

For the quantization procedure which is applied here it is of great importance that the constraints are second class constraints. For systems with first class constraints quantization is much more complicated \cite{Giu} and methods such as refined algebraic quantization \cite{ALMMT} and more recently a functional approach to the Batalin-Vilkovisky formalism \cite{FR1}\cite{FR2} have been developed. For the case at hand a rather conventional approach suffices.

The general pattern that will be followed is the same as in the previous subsections. First the system without constraint\footnote{We speak only of one constraint here, as the second constraint is derived from this constraint as a secondary constraint.} is considered and then the impact of the constraint on this system is investigated. The present subsection does not provide additional information on the dynamic system but clarifies the impact of the constraint at the level of algebraic quantum field theory.

As we will deal with the quantum system here and not the classical system, the question whether quantization commutes with reduction comes into play. In the present setting, reduction on the classical level can be split up into two parts. First the Poisson bracket on phase space is replaced by the Dirac bracket, which induces a presymplectic form, and in the second step a projection to the symplectic subspace is performed. It is not clear how the first part of the reduction procedure should be implemented at the quantum level, while the second part of the reduction procedure gives rise to a well defined algebraic procedure, such that quantization commutes with the second part of reduction by construction.

Starting out from the dynamic system for $\Psi$ and $\chi$ is unsuitable for a consideration from the point of view of algebraic quantum field theory as the system contains coupling of the fields, which undermines a completely rigorous treatment with current methods. Instead it would be preferable to start from a system of two independent free scalar fields. To achieve this, it is necessary to apply the constraints to arrive at a dynamic system for $\matfett{\Psi} = \tfrac{a^2}{\mathcal{H}z} \Psi$ and $u$.
\begin{alignat}{3}
	\matfett{\Psi}'' - && \triangle \matfett{\Psi} + && \left( - 2 \eps + \delta + \eps \delta - \delta^2 - \frac{\delta'}{\mathcal{H}} \right) \mathcal{H}^2 \matfett{\Psi} &=0\\
	u'' - && \triangle u + && \left( - 2 - 2 \eps + 3 \delta - 2 \eps^2 + 3 \eps \delta - \delta^2 + \frac{\delta'}{\mathcal{H}} \right) \mathcal{H}^2 u &= 0
\end{alignat}
and the constraint takes the form
\begin{align}
	\matfett{\Psi}' + ( 1 + \eps - \delta ) \mathcal{H} \matfett{\Psi} = \frac{\eps a^2}{z^2} u = 4 \pi G u
\end{align}

Ignoring the constraint, $\matfett{\Psi}$ and $u$ are free scalar fields and the symplectic space $\mathscr{P}$ of their modes is straightforward. Now in the present treatment the reduction of phase space is done in two steps. First the Poisson bracket is replaced by the degenerate Dirac bracket, leading to a space $\mathscr{D}$ with a symplectic subspace. Then a projection to the maximal symplectic subspace is performed by factoring out the primary and secondary constraint, leading to a symplectic space $\mathscr{D}/\mathcal{C}$.

To tackle the problem of quantization, it is useful to take the spatial Fourier transform of the fields, as this allows to make direct contact with the previous treatment. In this context, the Dirac form is defined as in equation \eqref{IntDiracDef}, where the mode Dirac form is of course different from the one given in equation \eqref{DiracForm}, as we are working with different coordinates on the symplectic space here. However, it can be expected that the limit $k\rar0$ should be of the same order of divergence as in equation \eqref{DiracForm} because the coordinates are related by a $k$-independent linear transformation. Therefore the integral can be expected to exist. The constraint in mode form is
\begin{align*}
	\mathcal{C}_{1,k} := \widetilde{\matfett{\Psi}}_k' + ( 1 + \eps - \delta ) \mathcal{H} \widetilde{\matfett{\Psi}}_k - 4 \pi G \widetilde{u}_k = 0
\end{align*}
and the Dirac bracket of the constraint with an arbitrary initial value set is
\begin{align}
	\left\{ \left( \begin{array}{c} f \\  (1 + \eps - \delta ) \mathcal{H} f \\ 0 \\ - 4 \pi G f \end{array} \right) , \left( \begin{array}{c} g_1 \\ g_2 \\ g_3 \\ g_4 \end{array} \right) \right\}_{\text{D}} &= 0 \label{ConstraintDirac}
\end{align}

The algebras of observables for $\matfett{\Psi}$ and $u$, the Weyl algebras $\mathfrak{A}_{\matfett{\Psi}/u}$, can be constructed in the standard way. The algebra of the full system is then given as the tensor product algebra $\mathfrak{A} = \mathfrak{A}_{\matfett{\Psi}} \otimes \mathfrak{A}_{u}$. Now the constraint cannot be simply interpreted as a relation in this algebra that has to be factored out to restrict to the constrained system, because it is incompatible with the commutation relations. As discussed above, applying the constraint is not very complicated at the level of field modes; at the level of the Weyl algebra the Weyl relations have to be replaced with the modified relations derived from the mode Dirac forms, profoundly changing the algebra.

For the thus acquired algebra $\mathfrak{A}_{\mathscr{D}}$ corresponding to $\mathscr{D}$ the constraint defines a non-trivial ideal, as the Dirac form is constructed to be degenerate in exactly this sense. This is in contrast to $\mathfrak{A}$ which, as a Weyl-Algebra corresponding to a non-degenerate Poisson form, is simple. The Weyl operators are defined with respect to a set of mode initial values
\begin{align*}
	\widehat{W} (\widetilde{\matfett{\Psi}}',\widetilde{\matfett{\Psi}}, \widetilde{u}', \widetilde{u} )
\end{align*}
and the Dirac-Weyl relation is
\begin{align}
	&\widehat{W} (f_1, f_2, f_3, f_4) \widehat{W} (g_1, g_2, g_3, g_4) \nonumber\\
	&\quad = \exp\left( \frac{i}{2} \left\{ \left( \begin{array}{c} f_1 \\ f_2 \\ f_3 \\ f_4 \end{array} \right) , \left( \begin{array}{c} g_1 \\ g_2 \\ g_3 \\ g_4 \end{array} \right) \right\}_{\text{D}} \right) \widehat{W} (f_1 + g_1, f_2 + g_2, f_3 + g_3, f_4 + g_4) \label{DiracWeylRelation}
\end{align}
This makes it possible to formulate the constraint at the level of the Dirac-Weyl algebra $\mathfrak{A}_{\mathscr{D}}$ as
\begin{align*}
	\forall f \in \mathscr{S}(\mathbb{R}^+) : \quad \widehat{W} (f, (1 + \eps - \delta ) \mathcal{H} f, 0, - 4 \pi G f) = \id
\end{align*}
where $\mathscr{S}(\mathbb{R}^+)$ denotes the space of Schwartz functions. The constraint is compatible with the Dirac-Weyl relation \eqref{DiracWeylRelation} by virtue of equation \eqref{ConstraintDirac}. The secondary constraint can be calculated and cast into a similar form at the level of the algebra. The constraint relations define two ideals on the algebra and are factored out. The resultant algebra is denoted as $\mathfrak{A}_{\mathscr{D}}/\mathcal{C}'$ and corresponds by construction to the reduced phase space $\mathscr{D}/\mathcal{C}$. This is illustrated in the following diagram, where it should be noted that the connection between $\mathfrak{A}$ and $\mathfrak{A}_{\mathscr{D}}$ is not clear at the purely algebraic level. Instead $\mathfrak{A}_{\mathscr{D}}$ is constructed using results at the classical level.

\begin{center}
	\begin{tikzcd}
		\mathscr{P} \arrow{rr}{\text{Dirac}} \arrow{d}{\text{Quantisation}} && \mathscr{D} \arrow{rr}{\text{Reduction}} \arrow{d}{\text{Quantisation}} && \mathscr{D}/\mathcal{C} \arrow{d}{\text{Quantisation}} \\
		\mathfrak{A} \arrow[dashrightarrow]{rr}{\text{?}} && \mathfrak{A}_{\mathscr{D}} \arrow{rr}{\text{Reduction}} && \mathfrak{A}_{\mathscr{D}}/\mathcal{C}'
	\end{tikzcd}
\end{center}

The commutation of quantization and reduction has been investigated in a plethora of models with a variety of tools. For a precedent to the type of commutative diagram presented here see e.g. \cite{LW} where a gauge theoretic model with first class constraints is investigated applying much more sophisticated technical tools than are used in the present work.

States from the state space $\mathcal{S}$ of the tensor product algebra $\mathfrak{A}$ can be freely chosen, for instance it is possible to pick states which have the Hadamard property for both fields. The simplest states of this kind are simply tensor products of Hadamard states for the single field algebras $\mathfrak{A}_{\matfett{\Psi}/u}$. When the constraint $\mathcal{C}'$ is applied, the resultant algebra $\mathfrak{A}_{\mathscr{D}}/\mathcal{C}'$ has a very different state space $\mathcal{S}_{\text{c}}$. In the construction applied here, the isomorphism $\mathfrak{A}_{\mathscr{D}}/\mathcal{C}' \simeq \mathfrak{A}_{u}$ is implied, which permits a canonical injective, unit preserving $C^*$-homomorphism $\alpha : \mathfrak{A}_{\mathscr{D}}/\mathcal{C}' \rar \mathfrak{A}$. The dual map to this homomorphism is a positive map $\alpha' : \mathcal{S} \rar \mathcal{S}_{\text{c}}$ which means that states on $\mathfrak{A}$ give rise to states on $\mathfrak{A}_{\mathscr{D}}/\mathcal{C}'$ by simply restricting to their action to $\mathfrak{A}_{u} \subset \mathfrak{A}$.

The explicit form of the relations between states in $\mathcal{S}$ and states in $\mathcal{S}_{\text{c}}$ is greatly simplified by our choice of dynamical variables. If one chooses different dynamical variables, like $\Psi$ and $\chi$, the construction of a corresponding algebra $\mathfrak{A}'$ is much more troublesome and a simple relation between $\mathfrak{A}'$ and $\mathfrak{A}'_{\mathscr{D}}/\mathcal{C}' \simeq \mathfrak{A}_{\mathscr{D}}/\mathcal{C}'$ need not exist.

\section{Conclusion}

We have illustrated how the fact that the dynamic system of linear scalar cosmological perturbations has only one degree of freedom makes it impossible to find a quantization procedure such that the metric and field perturbation both satisfy canonical commutation relations. By means of the Dirac formalism for the treatment of dynamic systems with constraints the standard literature choice for the canonical variables was justified from purely formal arguments. The Weyl quantization of the system is tractable and leads to a generic scalar field algebra. The physical degrees of freedom, namely the metric and field perturbations emerge as non-local fields from the quantization procedure. Although the approach of the present article could be applied to higher order perturbations, such an investigation provides a formidable challenge.

\section*{Acknowledgement}

This work was conducted as part of a Ph.D. project under supervision of Prof. R. Verch \cite{DoE} and with financial support of the International Max Planck Research School at the Max Planck Institute for Mathematics in the Sciences. We thank Adam Reichold for helpful discussions. We also thank two anonymous reviewers, whose comments helped improve this article.


\begin{thebibliography}{99}

\bibitem{ABM}
S. Alexander, R. H. Brandenberger, J. Magueijo \textit{Non-Commutative Inflation}, Phys. Rev. D 67, Issue 8 (2003) arXiv:hep-th/0108190
\bibitem{ALMMT}
A. Ashtekar, J. Lewandowski, D. Marolf, J. Mour\~ao, T. Thiemann \textit{Quantization of diffeomorphism invariant theories of connections with local degrees of freedom}, J. Math. Phys. 36 p. 6456 (1995) arXiv:gr-qc/9504018
\bibitem{BMR}
K. Bhattacharya, S. Mohanty, R. Rangarajan \textit{Temperature of the inflaton and duration of inflation from WMAP data}, Phys. Rev. Lett. 96, 121302 (2006) arXiv:hep-ph/0508070
\bibitem{BFM}
R. H. Brandenberger, H. A. Feldman, V. F. Mukhanov \textit{Theory of Cosmological Perturbations}, Phys. Rep. 215, p. 203--333 (1992)
\bibitem{BK}
R. H. Brandenberger, S. Koh \textit{Cosmological Perturbations in Non-Commutative Inflation} JCAP 2007, Issue 06 (2007) arXiv:hep-th/0702217
\bibitem{BuD}
T. S. Bunch, P. Davies \textit{Quantum Field Theory In De Sitter Space: Renormalization By Point Splitting}, Proceedings of the Royal Society of London, Series A, Vol. 360, No. 1700, p. 117--134 (1978)
\bibitem{DLP}
A. Dapor, J. Lewandowski, J. Puchta \textit{QFT on quantum spacetime: a compatible classical framework}, arXiv:gr-qc/1302.3038
\bibitem{DPP}
C. Dappiaggi, N. Pinamonti, M. Porrmann \textit{Local causal structures, Hadamard states and the principle of local covariance in quantum field theory}, Commun. Math. Phys. 304, p. 459--498 (2011) arXiv:hep-th/1001.0858
\bibitem{DEa}
P.D. D'Eath \textit{On the Existence of Perturbed Robertson-Walker Universes}, Ann. Phys. 98, p. 237--263 (1976)
\bibitem{Der}
A. Deriglazov \textit{Classical Mechanics: Hamiltonian and Lagrangian Formalism}, Springer (2010)
\bibitem{Dir}
P. A. M. Dirac \textit{Lectures on Quantum Mechanics}, Belfer Graduate School of Science, New York (1964)
\bibitem{Dod}
S. Dodelson \textit{Modern Cosmology}, Academic Press (2003)
\bibitem{DPR}
R. Durrer, H. Perrier, M. Rinaldi \textit{Explosive particle production in non-commutative inflation}, JHEP 2013, Issue 01 (2013) arXiv:gr-qc/1210.5373
\bibitem{DoE}
B. Eltzner \textit{Local Thermal Equilibrium on Curved Spacetimes and Linear Cosmological Perturbation Theory}, Ph.D. thesis, Leipzig (2012)\\
http://www.qucosa.de/recherche/frontdoor/?tx\_slubopus4frontend[id]=11747
\bibitem{FR1}
K. Fredenhagen, K. Rejzner \textit{Batalin-Vilkovisky formalism in the functional approach to classical field theory}, Comm. Math. Phys. 314, Issue 1, p. 93--127 (2012) arXiv:math-ph/1101.5112
\bibitem{FR2}
K. Fredenhagen, K. Rejzner \textit{Batalin-Vilkovisky formalism in perturbative algebraic quantum field theory}, Comm. Math. Phys. 317, Issue 3, p. 697--725 (2013) arXiv:math-ph/1110.5232
\bibitem{Giu}
D. Giulini \textit{That strange procedure called quantisation}, Lect. Notes Phys. 631, p. 17--40 (2003) arXiv:quant-ph/0304202
\bibitem{Haa}
R. Haag \textit{Local Quantum Physics: Fields, Particles, Algebras}, 2nd Edition, Springer (1996)
\bibitem{HT}
M. Henneaux, C. Teitelboim \textit{Quantization of Gauge Systems}, Princeton University Press (1992)
\bibitem{Hol}
S. Hollands \textit{Correlators, Feynman diagrams, and quantum no-hair in deSitter spacetime}, arXiv:gr-qc/1010.5367
\bibitem{KKLSS}
N. Kaloper, M. Kleban, A. Lawrence, S. Shenker, L. Susskind \textit{Initial conditions for inflation}, JHEP 0211, 037 (2002) arXiv:hep-th/0209231
\bibitem{KW}
B. S. Kay, R. M. Wald \textit{Theorems on the Uniqueness and Thermal Properties of Stationary, Nonsingular, Quasifree States on Spacetimes with a Bifurcate Killing Horizon}, Physics Reports 207, Issue 2, p. 49--136 (1991)
\bibitem{Kun}
S. Kundu \textit{Inflation with general initial conditions for scalar perturbations}, JCAP 1202, 005 (2012) arXiv:astro-ph/1110.4688
\bibitem{LW}
N. P. Landsman, U. A. Wiedemann \textit{The Stueckelberg-Kibble Model as an Example of Quantized Symplectic Reduction}, J. Math. Phys. 37, p. 2731--2747 (1996) arXiv:hep-th/9508134
\bibitem{Lo05}
B. Losic \textit{Quantum backreactions in slow-roll and de Sitter spacetimes}, Ph.D. thesis, University Of British Columbia (2005)\\
http://laplace.physics.ubc.ca/ThesesOthers/Phd/losic.pdf
\bibitem{LU08}
B. Losic, W. G. Unruh \textit{Cosmological Perturbation Theory in Slow-Roll Spacetimes}, Phys. Rev. Lett. 101, Issue 11 (2008) arXiv:gr-qc/0804.4296
\bibitem{MO}
U. D. Machado, R. Opher \textit{Generalized Non-Commutative Inflation}, Class. Quantum Grav. 29, Number 6 (2012) arXiv:astro-ph/1102.4828
\bibitem{MM}
D. Marolf, I. A. Morrison \textit{The IR stability of de Sitter QFT: results at all orders}, Phys. Rev. D 84, Issue 4 (2011) arXiv:gr-qc/1010.5327
\bibitem{Mon75}
V. Moncrief, \textit{Space-time symmetries and linearization stability of the Einstein Equations. I.}
J. Math. Phys. 16, Issue 3, p. 493--498 (1975)
\bibitem{Mon76}
V. Moncrief. \textit{Space-time symmetries and linearization stability of the Einstein Equations. II.}
J. Math. Phys. 17, Issue 10, p. 1893--1902 (1976)
\bibitem{PSS}
A. Perez, H. Sahlmann, D. Sudarsky \textit{On the quantum origin of the seeds of cosmic structure}, Class. Quantum Grav. 23, p. 2317--2354 (2006)
\bibitem{Str}
N. Straumann \textit{From primordial quantum fluctuations to the anisotropies of the cosmic microwave background radiation}, Annalen der Physik 15, No. 10--11, p. 701--845 (2006) arXiv:hep-ph/0505249
\bibitem{Thi}
T. Thiemann \textit{Loop Quantum Gravity: An Inside View}, Lecture Notes in Physics 721, p. 185--263 (2007) arXiv:hep-th/0608210
\bibitem{Un98}
W. G. Unruh \textit{Cosmological long wavelength perturbations},\\
arXiv:astro-ph/9802323
\bibitem{Wal77}
R. Wald \textit{The Back Reaction Effect in Particle Creation in Curved Spacetime}, Commun. Math. Phys. 54, p. 1 (1977)
\bibitem{Wal84}
R. Wald \textit{General Relativity}, The University of Chicago Press (1984)
\bibitem{Wal94}
R. M. Wald \textit{Quantum field theory in curved spacetime and black hole thermodynamics}, Chicago Lectures in Physics, The University of Chicago Press (1994)
\bibitem{Wei}
S. Weinberg \textit{Lectures on Quantum Mechanics}, Cambridge University Press (2012)

\end{thebibliography}
\end{document}